\algnewcommand\algorithmicinput{\textbf{Input:}}
\algnewcommand\Input{\item[\algorithmicinput]}
\algnewcommand\algorithmicoutput{\textbf{Output:}}
\algnewcommand\Output{\item[\algorithmicoutput]}
\algnewcommand{\LineComment}[1]{\Statex \(\triangledown \) #1}  
\DeclareMathOperator*{\argmin}{arg\,min}
\let\NAT@parse\undefined
\newcommand{\cL}{\mathcal L}
\newcommand{\cD}{\mathcal D}
\newcommand{\cA}{\mathcal A}
\newcommand{\bbN}{\mathbb N}
\newcommand{\bbR}{\mathbb R}
\newcommand{\bu}{\bm u}
\newcommand{\DN}{\Delta_N}
\newcommand{\dN}{\delta_N}
\newcommand{\DM}{\Delta_M}
\newcommand{\deq}{\coloneqq}
\newtheorem{definition}{Definition}
\newtheorem{remark}{Remark}
\newtheorem{lemma}{Lemma}
\newtheorem{proposition}{Proposition}
\newtheorem{problem}{Problem}
\begin{document}

\title{
	Optimal Control of Boolean Control Networks with Discounted Cost: An Efficient Approach based on Deterministic Markov Decision Process 
}

\author{Shuhua Gao, Cheng Xiang, and Tong Heng Lee 
	\thanks{All authors are with the Department of Electrical and Computer Engineering, National University of Singapore, Singapore, 117583. Email: {\tt\small elexc@nus.edu.sg}}
}

\maketitle
\thispagestyle{empty}
\pagestyle{empty}

\begin{abstract}
This paper deals with the infinite-horizon optimal control problem for Boolean control networks (BCNs) with a discounted-cost criterion. This problem has been investigated in existing studies with algorithms characterized by high computational complexity. We thus attempt to develop more efficient approaches for this problem from a deterministic Markov decision process (DMDP) perspective. First, we show the eligibility of a DMDP to model the control process of a BCN and the existence of an optimal solution. Next, two approaches are developed to handle the optimal control problem in a DMDP. One approach adopts the well-known value iteration algorithm, and the other resorts to the Madani's algorithm specifically designed for DMDPs. The latter approach can find an exact optimal solution and outperform existing methods in terms of time efficiency, while the former value iteration based approach usually obtains a near-optimal solution much faster than all others. The 9-state-4-input \textit{ara} operon network of the bacteria \textit{E. coli} is used to verify the effectiveness and performance of our approaches. Results show that both approaches can reduce the running time dramatically by several orders of magnitude compared with existing work.

\end{abstract}

\section{Introduction}
An effective and widely used model of gene regulatory networks \cite{barabasi2004network} is the Boolean network (BN) model, first proposed by Kauffman in 1969 \cite{kauffman1969metabolic}, that describes gene expression state with binary values. Since then, BNs have drawn a lot of research interest and been applied to various fields beyond biomolecular networks, such as information mining in consumer community networks \cite{meng2018properties} and analysis of social consensus impacted by peer interactions \cite{Emergence2007}. We can further incorporate binary control inputs into a BN to manipulate its states and get a control system commonly referred to as a \textit{Boolean control network} (BCN) \cite{zhao2010input}. 

A considerable number of studies on BCNs emerged in the last decade thanks to the development of a novel mathematical tool called the semi-tensor product (STP) \cite{zhao2010input, cheng2010linear}. An equivalent algebraic state-space representation (ASSR) can be built using STP, which makes it possible to adapt established techniques in traditional control theory for similar investigations of BCNs. Based on the STP and the ASSR of BCNs, quite a few control-theoretical problems have been tackled in the recent literature, for example, controllability and observability \cite{cheng2009controllability, laschov2013observability, zhao2010input}, stabilization \cite{cheng2011stability}, pinning control \cite{lu2019pinning},  and output tracking \cite{zhang2019output}, to name a few. Following this mainstream, we also initiate our study on infinite-horizon optimal control of BCNs with the ASSR here.

Optimal control is a classic topic that deals with the design of an \textit{optimal} control law according to a given performance index. Specifically,  optimal control of BCNs can be used to develop medical intervention strategies for an underlying GRN to treat diseases like cancers while minimizing expenses or maximizing the therapeutic effect \cite{faryabi2008optimal}. A variety of optimal control problems regarding BCNs have been studied in recent years, which are divided into two broad categories depending on the optimization horizon length. In the first class, the horizon length is finite, and the performance criterion is the summation of stage costs at a countable number of time steps as well as one terminal cost. An early study was conducted in \cite{laschov2010maximum} towards the Mayer-type optimal control (i.e., only considering the terminal cost) of single-input BCNs by a maximum principle. Two common objectives in optimal control, minimum energy, and minimum time, have been attempted in \cite{li2013minimum} and \cite{laschov2013minimum}, respectively. E. Fornasini \textit{et al.} investigate more general cases of such finite-horizon problems in \cite{fornasini2013optimal} and present recursive algorithms that are analogous to the discrete-time Riccati equation. The second class of problems, i.e., infinite-horizon optimal control, are generally more challenging, of which the objective function takes either an average-cost form or a discounted-cost form to ensure the convergence of the total cost \cite{faryabi2008optimal}. The first attempt for infinite-horizon optimal control with an average-cost criterion was presented in \cite{zhao2010optimal} by enumerating all cycles in the input-state space with prohibitively high time complexity. Several improvements were proposed later, including a Floyd-like algorithm \cite{zhao2011floyd}, a value iteration algorithm \cite{fornasini2013optimal}, and a policy iteration approach \cite{wu2019optimal}. By contrast, the discounted-cost counterpart has got less attention, which was first addressed in \cite{cheng2014optimal} using a Floyd-like algorithm similar to that in \cite{zhao2011floyd}. The algorithm \cite{zhao2011floyd} has been modified in a recent study \cite{zhu2018optimal} to operate in the state space instead of the input-state space of a BCN for further speedup.

A major issue of the STP-based algebraic methods discussed above is their prohibitively high computational cost once the size of the BCN is large. It has been proved in \cite{AKUTSU2007670} that, in general, control problems on BCNs are NP-hard. Consequently, it is hopeless to seek polynomial-time algorithms since P $ \ne $ NP is a widely believed conjecture. This is indeed an intuitive fact because all algorithms above run in a polynomial time of $ N $, where $ N \deq 2^n$ and $ n $ is the number of state variables in a BCN. Nevertheless, even faced with the NP-hardness, we can still pursue shorter running time in practice by designing algorithms whose time complexity is a lower-order polynomial in $ N $. For example, by resorting to the Warshall algorithm, Liang \textit{et al. }\cite{liang2017improved} proposed an improved controllability criterion for BCNs with time complexity reduced from $ O(N^4) $ to $ O(N^3) $. Our latest work \cite{gao2019infinite} (preprint) investigates infinite-horizon optimal control of BCNs with average cost using Karp's minimum mean cycle (MMC) algorithm and achieves the lowest time complexity so far. Notably, regarding the discounted-cost optimal control problem considered in this paper, the existing two studies \cite{cheng2014optimal} and \cite{zhu2018optimal} both attempt to locate the overall optimal cycle by examining individual optimal cycles of length ranging from 1 to $ N $ iteratively, which consequently leaves ample space for further efficiency improvement. 

The primary goal of this study is to develop more efficient algorithms for discounted-cost infinite-horizon optimal control of BCNs. As a natural choice, the Markov decision process (MDP) theory has been extensively used in optimal control of probabilistic and stochastic Boolean networks, e.g., see \cite{faryabi2008optimal} and \cite{wu2017finite}. Though a deterministic BCN considered here can undoubtedly be treated as a special stochastic BCN, more complexity will be introduced that causes unnecessary deterioration of computational efficiency. To the best of our knowledge, there is currently no work on optimal control of BCNs that views the control process as a deterministic Markov decision process (DMDP).  The interesting point is that, by adopting the equivalent DMDP description, we can resort to established algorithms, like Madani's algorithm \cite{madani2010discounted}, to solve the discounted-cost optimal control problem for BCNs with reduced time complexity. The development of such efficient, DMDP-based algorithms forms the main contribution of this paper.

The rest of this paper is organized as follows. First, in Section \ref{sec: preliminary}, we introduce the algebraic representation of BCNs. We then formulate the optimal control problem in Section \ref{sec: problem}. The main results of our study are presented in Section \ref{sec: results}, which detail the development of two efficient approaches. We compare the performance of the proposed approaches and existing ones on a biological network  in Section \ref{sec: example}. Finally, Section \ref{sec: conclusion} concludes this study.
The Python implementation of all algorithms in this paper is available at \url{https://github.com/ShuhuaGao/bcn_opt_dc}.

\section{Preliminaries} \label{sec: preliminary}
\subsection{Notations}
\begin{itemize}
	\item $ \bbR $, $ \bbN $, and $ \bbN^+ $ denote the sets of real numbers, nonnegative integers, and positive integers, respectively. Given $ k , n \in \bbN $ with $ k \le n $, $ [k, n] \deq \{k, k + 1, \cdots, n  \}$.
	\item $\mathcal{A}_{p\times q}$ denotes the set of all $p\times q$ matrices. Given $ A \in \cA $, $ A_{ij} $ is its $ (i, j) $-th entry, and $ \textrm{Row}_i(A) $, $\textrm{Col}_j(A) $ denote its $ i $-th row and $ j $-th column respectively. 
	\item $ \delta_n^i \deq \textrm{Col}_i(I_n) $, where $ I_n $ is the $ n $-dimensional identity matrix. $\Delta_n \deq \{ \delta_n^i | i = 1, 2, \cdots, n \}$, and $ \Delta \deq \Delta_2 $.  The shorthand of $ \{\delta_n^{i_1}, \delta_n^{i_2}, \cdots, \delta_n^{i_k}\} $ is $ \delta_n\{i_1, i_2, \cdots, i_k\} $.
	\item A matrix $ L \in \cA_{n \times q} $ with $ \text{Col}_i(L) \in \Delta_{n}, \forall i \in [1, q], $ is called a \textit{logical matrix}. Let $\mathcal{L}_{n\times q} $ denote the set of all $ n\times q $ logical matrices.
	\item $ \cD \deq \{0, 1\}. $ Logical operators \cite{cheng2010linear}:  $\land$, conjunction; $\lor$, disjunction; $ \lnot $, negation; and $ \oplus $, exclusive or.
\end{itemize}

\subsection{Algebraic Representation of BCNs}
\begin{definition}\cite{zhao2010optimal}
	The semi-tensor product (STP) of two matrices $ A \in   \mathcal{M}_{m\times n}$ and $ B \in   \mathcal{M}_{p\times q}$ is defined by
	\begin{equation*}
	A \ltimes B = (A \otimes I_{\frac{s}{n}})(B \otimes I_{\frac{s}{p}}),
	\end{equation*}
	where $\otimes$ denotes the Kronecker product,  and $ s $ is the least common multiple of $ n $ and $ p $. $ \ltimes_{i=1}^n A_i \deq A_1\ltimes A_2\ltimes \cdots\ltimes A_n $.
\end{definition}
\begin{remark}
	The STP generalizes the traditional matrix product while preseving most fundamental properties \cite{cheng2010linear}. For notational simplicity, the symbol $\ltimes$ is omitted hereafter.
\end{remark}

Identify Boolean values in $ \cD $ by $ 0 \sim \delta_2^1 $ and $ 1 \sim \delta_2^2 $. 
\begin{lemma} \cite{cheng2010linear} \label{lemma: structure matrix}
	Any Boolean function $ f(x_1, x_2, \cdots, x_n): \Delta^n \rightarrow \Delta $ can be expressed uniquely in a multi-linear form as 
	\begin{equation}
	f(x_1, x_2, \cdots, x_n) = M_f x_1 x_2 \cdots x_n,
	\end{equation}
	where $ M_f \in  \mathcal{L}_{2\times 2^n}$ is the unique \textit{structure matrix} of $ f $.
\end{lemma}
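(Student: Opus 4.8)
The plan is to exploit the bijection between $\Delta^n$ and the standard basis of $\mathbb{R}^{2^n}$ that the STP furnishes, and then read off $M_f$ directly from the truth table of $f$. First I would observe that for column vectors $x_i \in \Delta_2$ the semi-tensor product collapses to the ordinary Kronecker product: since each factor is $2\times 1$, the least common multiple in the STP definition gives $x_i x_{i+1} = (x_i \otimes I_2) x_{i+1} = x_i \otimes x_{i+1}$, and iterating yields $x_1 x_2 \cdots x_n = x_1 \otimes x_2 \otimes \cdots \otimes x_n$, a $2^n \times 1$ vector. This reduces the whole statement to a question about Kronecker products of canonical vectors.

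The crux of the argument is the claim that the map $\phi : \Delta^n \to \Delta_{2^n}$ sending $(x_1,\ldots,x_n)$ to $x_1\otimes\cdots\otimes x_n$ is a bijection onto the $2^n$ canonical vectors. This rests on the elementary identity $\delta_p^i \otimes \delta_q^j = \delta_{pq}^{(i-1)q + j}$, which I would verify from the definition of the Kronecker product: the single nonzero entry sits in the $i$-th block of size $q$, hence in row $(i-1)q+j$. An induction on $n$ then shows $\delta_2^{j_1}\otimes\cdots\otimes\delta_2^{j_n} = \delta_{2^n}^{k}$ with $k = 1 + \sum_{i=1}^n (j_i - 1)\, 2^{n-i}$. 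Since $(j_1,\ldots,j_n) \mapsto k$ is exactly the standard binary encoding, it is a bijection from $\{1,2\}^n$ onto $[1, 2^n]$, so $\phi$ is a bijection and in particular $x_1 x_2 \cdots x_n \in \Delta_{2^n}$.

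With the bijection in hand, existence of $M_f$ follows by construction. I would define the matrix $M_f$ column by column, setting $\textrm{Col}_k(M_f) \deq f(\phi^{-1}(\delta_{2^n}^k)) \in \Delta_2$; that is, the $k$-th column records the output of $f$ (under the $0\sim\delta_2^1$, $1\sim\delta_2^2$ identification) on the unique input that $\phi$ maps to $\delta_{2^n}^k$. Each column lies in $\Delta_2$ because $f$ is $\Delta_2$-valued, so $M_f \in \mathcal{L}_{2\times 2^n}$ is genuinely a logical matrix. Then for any input, writing $\phi(x_1,\ldots,x_n) = \delta_{2^n}^k$ and using $M_f \delta_{2^n}^k = \textrm{Col}_k(M_f)$ gives $M_f\, x_1 x_2 \cdots x_n = f(x_1,\ldots,x_n)$, as required.

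Finally, uniqueness follows from the same evaluation identity. If any $M \in \mathcal{A}_{2\times 2^n}$ satisfies $M\, x_1 x_2 \cdots x_n = f(x_1,\ldots,x_n)$ for all inputs, then evaluating at $\phi^{-1}(\delta_{2^n}^k)$ forces $\textrm{Col}_k(M) = M\delta_{2^n}^k = f(\phi^{-1}(\delta_{2^n}^k)) = \textrm{Col}_k(M_f)$ for every $k$, whence $M = M_f$. The only genuinely delicate step is the combinatorial bijection $\phi$; everything after it is bookkeeping, and the main care needed is to keep the row-index formula $(i-1)q+j$ consistent throughout the induction.
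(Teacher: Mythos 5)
Your proof is correct. The paper offers no proof of this lemma at all --- it is imported verbatim from the cited reference --- and your argument (the STP of canonical vectors collapses to the Kronecker product, which bijectively encodes $\Delta^n$ into $\Delta_{2^n}$ via the index formula $k = 1 + \sum_{i=1}^{n}(j_i-1)2^{n-i}$; then $M_f$ is read off the truth table column by column, with uniqueness forced by evaluation on the canonical basis) is essentially the standard proof found in that reference, so there is nothing to reconcile.
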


Consider a BCN with $ n $ nodes and $m  $ control inputs:
\begin{equation} \label{eq: bcn}
\begin{cases}
x_1(t+1) = f_1(x_1(t), \cdots, x_n(t), u_1(t), \cdots, u_m(t))\\
\vdots \\
x_n(t+1) = f_n(x_1(t), \cdots, x_n(t), u_1(t), \cdots, u_m(t)),
\end{cases}
\end{equation}
where $ x_i(t) \in \Delta, u_j(t) \in \Delta, $ denote states and control inputs respectively, and $ f_i : \Delta^{m+n} \rightarrow \Delta $ is the Boolean function associated with the state variable $ x_i $, $ i \in [1, n], j \in [1, m] $. 

Using the STP, the ASSR of the BCN \eqref{eq: bcn} is 
\begin{equation} \label{eq: ASSR}
x(t+1) = Lu(t)x(t),
\end{equation}
where $  x(t) \deq x_1(t) \ltimes \cdots \ltimes x_n(t) \in \Delta_{2^n} $ and $  u(t) \deq u_1(t) \ltimes \cdots \ltimes u_m(t) \in \Delta_{2^m} $ are canonical vectors. Let $ N \deq 2^n $ and $ M \deq 2^m $. We have the logical matrix $ L \in \cL_{N \times MN} $. Ref. \cite{cheng2010linear} details the computation of \eqref{eq: ASSR}. Note that the two notations $ N $ and $ M $ defined here are used throughout this paper.

\section{Problem Formulation} \label{sec: problem}
Given the BCN \eqref{eq: ASSR}, let the cost of applying control $ u \in \DM $ at state $ x \in \DN $ be $ g(x, u) $. The bounded function $ g: \DN \times \DM \rightarrow  \bbR$ is called the \textit{stage cost} function. We seek a control sequence that minimizes the discounted cost for BCN \eqref{eq: ASSR} accumulated  in an infinite horizon. Note that we consider a more general and  challenging scenario here beyond that in \cite{cheng2014optimal} and \cite{zhu2018optimal}, which involves various constraints on both states and inputs. The problem is formalized as follows.
\begin{problem} \label{prob: 1}
	Consider BCN \eqref{eq: ASSR}. Solve the following constrained optimization problem for optimal control:
	\begin{align} \label{eq: problem}
	\min_{\bu} J(\bu) = \lim_{T\rightarrow\infty}  \sum_{t=0}^{T-1} \lambda^tg(x(t), u(t)), \nonumber\\
	\textrm{s.t.} \begin{cases}
	x(t+1) = Lu(t)x(t) \\
	x(t) \in C_{\textnormal{\textrm{x}}} \\
	u(t) \in C_{\textnormal{\textrm{u}}}(x(t))  \\
	x(0) = x_0 
	\end{cases},
	\end{align}
	where $ \bu = \big(u(t) \in \Delta_M\big)_{t=0}^{T-1} $ denotes a control sequence; $ \lambda \in (0, 1) $ is the discount factor; $ C_{\textnormal{\textrm{x}}} \subseteq \DN$ and $ C_{\textnormal{\textrm{u}}}(x(t)) \subseteq \DM$ denote the state constraints and the state-dependent control input constraints respectively; and  $ x_0 \in C_{\textnormal{\textrm{x}}}$ is the initial state of the BCN. 
\end{problem}

\begin{remark} \label{rmk: problem}
	No constraints are considered in \cite{cheng2014optimal}, and only the avoidance of undesirable states is handled in \cite{zhu2018optimal}. By contrast, the above problem formulation emerges as the most generic one, which can incorporate state constraints, control constraints, and transition constraints \cite{zhang2017finite}. We assume that Problem \ref{prob: 1} is feasible, that is, at least one control sequence exists that allows the indefinite evolution of the BCN.
\end{remark}

\section{Main Results} \label{sec: results}
In this section, we first show that the control of a BCN can be handled elegantly in an MDP framework. Then, we propose two methods to solve Problem \ref{prob: 1}: a general value iteration approach commonly used in MDP optimization and a more efficient approach specialized for a DMDP.

\subsection{Deterministic Markov Decision Process (DMDP)}
An MDP is a widely used mathematical model in sequential decision making under uncertaintis, that is, choosing differente actions in different situations \cite{RL2}. Specificially, in our application with BCNs, the \textit{action} at time point $ t $ refers to the control input $ u(t) $, and the \textit{situation} is represented by the network state $ x(t) $. In the MDP framework, each decision is associated with a \textit{reward}. The essential property of an MDP is that the next state and the reward depend only on the current state and the current action, known as the \textit{Markov property} \cite{RL2}. Obviously, we see from \eqref{eq: ASSR} that the control process of a BCN is indeed an MDP, because $ x(t+1) $ is completed determined by $ x(t) $ and $ u(t) $.

In an MDP, the goal of the controller is to maximize the cumulative reward from any initial state in the long run \cite{RL2,busoniu2017reinforcement}. A \textit{policy} is a decision rule that specifies which action should be chosen for each state. In our BCN application, the \textit{reward} is replaced by the \textit{cost} in Problem \ref{prob: 1}. Accordingly, we aim to find a policy that minimizes the aggregated discounted cost over the infinite horizon for optimal control of BCNs.

Unlike  general MDPs considered in reinforcement learning, a useful property of the BCN control process is that its state transition and rewarding are both deterministic. That is, given the current state $ x \in \DN $ and the control action $ u \in \DM $, the next state is definitely  $ Lux $ by \eqref{eq: ASSR}, and the cost is fixed to $ g(x, u) $ in Problem \ref{prob: 1}. Formally, the control process of a BCN is called a deterministic Markov decision process (DMDP). As we will show later, such determinism allows the development of time-bounded optimization algorithms compared with those for general MDPs.

\subsection{Existence of Optimal Solutions}
In control of BCNs, a policy $ \pi $ refers to a mapping from states to control inputs, i.e., $ \pi: \DN \rightarrow \DM $. A feasible policy must respect the constraints of Problem \ref{prob: 1}: for any $ x \in C_{\textrm{x}} $, it must satisfy
\begin{equation}
\pi(x) \in C_{\textrm{u}}(x), \ L\pi(x)x \in C_{\textrm{x}}.
\end{equation}

Now we can restate Problem \ref{prob: 1} using the MDP terminology as follows: find an optimal policy $\pi_* $, which conforms to all constraints, such that the performance index function $ J $ is minimized. The first question coming to our mind is whether an optimal policy exists for Problem \ref{prob: 1}. In the following illustration, we mainly borrow the notations and terminology from the monograph \cite{RL2}. Note that we are dealing with a DMDP, and all probabilistic expectations in the general MDP framework can thereby be omitted.

The quality of a policy can be evaluated by a value function \cite{busoniu2017reinforcement}.
Given a policy $ \pi $, the\textit{ value function} of a state $ x $, termed $ v_{\pi}(x) $, is the performance index obtained with the initial state $ x $  and the control sequence $ \bu $ generated by $ \pi $:
\begin{equation} \label{eq: value function}
v_{\pi}(x) = \sum_{t=0}^{\infty} \lambda^tg(x(t), \pi(x(t))) \bigg|_{x(0) = x}, \ x \in C_{\textrm{x}}.
\end{equation}
For simplicity, we set $ v_{\pi}(x) = \infty $ for $ x \notin C_{\textrm{x}} $. Let the next state be $ x' = L\pi(x)x $. From \eqref{eq: value function}, the recursion below holds
\begin{equation}\label{eq: vpi}
v_{\pi}(x) = g(x, \pi(x)) + \lambda v_{\pi}(x'), \ x \in C_{\textrm{x}}.
\end{equation}

Since we aim to minimize the  cost, we say a policy $ \pi $ is better than another policy $ \pi' $ if and only if $ v_{\pi}(x) \le v_{\pi'}(x), \forall x \in C_{\textrm{x}} $. The optimal value function $ v_* $ and the optimal policy $ \pi_* $ are specified by
\begin{align}
v_*(x) &= \min_{\pi}v_{\pi}(x), \label{eq: v*}\\
\pi_*(x) &= \argmin_{u \in C_{\textrm{u}}(x)} g(x, u) + \lambda v_*(Lux). \label{eq: pi*}
\end{align}
Further, there holds obviously $ v_{\pi_*}(x) = v_*(x) , \forall x \in C_{\textrm{x}},$ by the Bellman optimality equation \cite{RL2,busoniu2017reinforcement}, given below
\begin{equation} \label{eq: Bellman}
v_{*}(x) = \min_{u \in C_{\textrm{u}}(x)} g(x, u) + \lambda v_*(Lux), \ x \in C_{\textrm{x}},
\end{equation}

A fundamental result in the MDP theory is that the infinite sum in \eqref{eq: value function} has a finite value as long as the reward sequence is bounded \cite{RL2}. As aforementioned in Section \ref{sec: problem}, it is natural and common to set up a bounded stage cost function $ g $ \cite{cheng2014optimal,zhu2018optimal,zhao2010optimal,wu2019optimal}, which implies a finite value function \eqref{eq: ASSR} for each state. Additionally, recall that the number of states and the number of control inputs are both finite in BCN \eqref{eq: ASSR}, i.e., $ N $ and $ M $, respectively. Consequently, the number of possible policies in our case is also finite, which is at most $ M^N $ after constraint-violating ones are eliminated. Note that we assume Problem \ref{prob: 1} is feasible, i.e., at least one policy exists that violates no constraints (see Remark \ref{rmk: problem}). By the policy improvement theorem \cite{RL2}, an optimal policy always exists that minimizes the value function for all states, from which we can construct the optimal  control sequence for Problem \ref{prob: 1} (see Section \ref{sec: vi}). The correctness of the following proposition is obvious.

\begin{proposition}
	Consider Problem \ref{prob: 1}. There exists an optimal control sequence if the stage cost function $ g $ is bounded.
\end{proposition}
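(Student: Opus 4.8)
The plan is to establish the proposition by a finiteness-plus-boundedness argument that reduces the existence of an optimal control sequence to the existence of an optimal policy, and then to construct the control sequence by unrolling that policy along the BCN dynamics. First I would argue that every feasible policy $\pi$ yields a finite value function: since $g$ is bounded, say $|g(x,u)| \le B$ for all admissible $(x,u)$, the geometric series in \eqref{eq: value function} is dominated by $\sum_{t=0}^{\infty}\lambda^t B = B/(1-\lambda) < \infty$ because $\lambda \in (0,1)$. Hence $v_{\pi}(x)$ is a well-defined real number for every $x \in C_{\textrm{x}}$ and every feasible $\pi$, so the minimization in \eqref{eq: v*} is taken over real-valued quantities rather than potentially infinite ones.

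Next I would exploit the determinism and finiteness of the DMDP. Because a policy is a map $\pi: \DN \rightarrow \DM$ and both the state space and the input space are finite (of sizes $N$ and $M$), there are at most $M^N$ candidate policies, and after discarding those that violate the state or input constraints of Problem \ref{prob: 1} we are left with a \emph{finite, nonempty} set of feasible policies—nonempty precisely by the feasibility assumption of Remark \ref{rmk: problem}. A minimum of a finite collection of real numbers is always attained, so for each fixed initial state $x_0$ there is a feasible policy achieving $v_*(x_0)$. To obtain a single policy that is simultaneously optimal for \emph{all} initial states, I would invoke the policy improvement theorem cited in the excerpt, which guarantees an optimal policy $\pi_*$ satisfying $v_{\pi_*}(x) = v_*(x)$ for every $x \in C_{\textrm{x}}$; equivalently, $\pi_*$ solves the Bellman optimality equation \eqref{eq: Bellman}.

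Finally I would convert the optimal policy into the optimal control sequence demanded by Problem \ref{prob: 1}. Starting from $x(0) = x_0$ and applying the recursion $u(t) = \pi_*(x(t))$ together with the state update $x(t+1) = Lu(t)x(t)$ from \eqref{eq: ASSR}, I generate a sequence $\bu = (u(t))_{t=0}^{\infty}$; since $\pi_*$ is feasible, every $u(t) \in C_{\textrm{u}}(x(t))$ and every $x(t) \in C_{\textrm{x}}$, and by construction $J(\bu) = v_{\pi_*}(x_0) = v_*(x_0)$, which is minimal. The main obstacle—though a mild one given the results already stated—is not convergence of the sum (immediate from boundedness) but the step from per-state optimality to a single uniformly optimal policy: one must be careful that the finitely many state-wise minima can be realized by one common policy rather than by a different policy for each state. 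This is exactly what the policy improvement theorem delivers, so in a DMDP the proposition is essentially immediate once boundedness secures finiteness of each $v_{\pi}$ and the finite policy space secures attainment of the minimum; I would present it at the level of a short verification rather than a detailed calculation.
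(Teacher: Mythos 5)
Your proposal is correct and takes essentially the same route as the paper: boundedness of $g$ together with $\lambda \in (0,1)$ gives finite value functions, finiteness of the state and input spaces gives at most $M^N$ feasible policies (nonempty by the feasibility assumption of Remark \ref{rmk: problem}), the policy improvement theorem supplies a single policy that is optimal for all states simultaneously, and unrolling that policy through the dynamics \eqref{eq: ASSR} yields the optimal control sequence. The paper's own justification is exactly this chain of reasoning (given in the paragraph preceding the proposition, after which the result is declared obvious), so your write-up matches it, being if anything slightly more explicit about the step from per-state minima to one uniformly optimal policy.
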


\begin{remark}
	The existence of solutions to infinite-horizon optimal control of BCNs with discounted cost (no constraints involved) has been shown in \cite{cheng2014optimal} and \cite{zhu2018optimal} from other aspects instead of the DMDP here. Note that the optimal control strategies for Problem \ref{prob: 1} may not be unique.
\end{remark}

\subsection{Value Iteration based Approach} \label{sec: vi}
A widely used method in searching optimal policies for finite MDPs is \textit{value iteration}, a dynamic programming based algorithm, which attempts to estimate the optimal value function \eqref{eq: value function} of each state via iterative update \cite{RL2, busoniu2017reinforcement}. It is intuitive to derive the update rule in value iteration from the Bellman optimality equation. Recall that the BCN control process is essentially a DMDP, and its optimality equation has been presented in \eqref{eq: Bellman}.

Given an initial guess of the value function, termed $ V(\cdot) $, value iteration works by updating the value function following a rule similar to the optimality equation \eqref{eq: Bellman}:
\begin{equation} \label{eq: Bellman update}
V(x) = \min_{u \in C_{\textrm{u}}(x)} g(x, u) + \lambda V(Lux), \ x \in C_{\textrm{x}}.
\end{equation}
Such update is repeated iteratively until the value function converges for all states, i.e., the change between two iterations gets small enough below a threshold $ \theta \ge 0$.  After the update loop is terminated, we can determine an (approximate) optimal policy from the value function by
\begin{equation} \label{eq: optimal policy}
	\pi_*(x) = \argmin_{u \in C_{\textrm{u}}(x)} g(x, u) + \lambda V(Lux), \ x \in C_{\textrm{x}}.
\end{equation}

Next, a state feedback control law for optimal control can be directly constructed from the optimal policy \eqref{eq: optimal policy} with the following proposition.
\begin{proposition} \label{prop: feedback matrix}
	Consider Problem \ref{prob: 1}. If $ \pi_* $ is an optimal policy for the associated discounted-cost DMDP, then infinite-horizon optimal control can be achieved by stationary state feedback $ u=Kx $, where nontrivial columns of the matrix $ K \in \cL_{M \times N} $ are specified by 
	\begin{equation} \label{eq: state feedback}
		\textnormal{\textrm{Col}}_i(K) = \pi_*(\dN^i),  \ \textrm{if } \dN^i \in C_{\textrm{x}},
	\end{equation}
	with the other columns arbitrarily set.
\end{proposition}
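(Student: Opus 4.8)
The plan is to show that the logical matrix $K$ defined by \eqref{eq: state feedback} faithfully implements the optimal policy $\pi_*$ as a stationary state feedback, and thereby inherits its optimality. I would proceed in three steps: first confirm that $K$ is a well-defined logical matrix; then verify that the feedback $u = Kx$ reproduces $\pi_*$ on the feasible part of the state space; and finally argue that the induced closed-loop trajectory stays feasible and attains the optimal value $v_*$.

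First I would check that $K \in \cL_{M \times N}$. Since every state is a canonical vector $\dN^i$ and each $\pi_*(\dN^i) \in \DM$ is likewise canonical, every column prescribed by \eqref{eq: state feedback} lies in $\DM$; choosing the remaining columns (those with $\dN^i \notin C_{\textrm{x}}$) to be any fixed element of $\DM$ keeps $K$ logical. I would then exploit the defining identity $K\dN^i = \textrm{Col}_i(K)$ of logical matrices: for any feasible state $x = \dN^i \in C_{\textrm{x}}$ it yields $Kx = \textrm{Col}_i(K) = \pi_*(\dN^i) = \pi_*(x)$, so evaluating the feedback at a feasible state returns exactly the control chosen by $\pi_*$. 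Substituting into \eqref{eq: ASSR}, the closed-loop dynamics become $x(t+1) = LKx(t)x(t) = L\pi_*(x(t))x(t)$, precisely the transition induced by $\pi_*$.

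Finally, I would argue by induction that, starting from $x(0) = x_0 \in C_{\textrm{x}}$, the closed-loop trajectory coincides with the one generated by $\pi_*$ and never leaves $C_{\textrm{x}}$: feasibility of $\pi_*$ guarantees $L\pi_*(x)x \in C_{\textrm{x}}$ whenever $x \in C_{\textrm{x}}$, so the induction goes through and only the nontrivial columns of $K$ are ever exercised---which is what justifies setting the other columns arbitrarily. Since the two trajectories agree at every step, so do their stage costs and hence their accumulated discounted costs, whence $u = Kx$ attains $v_{\pi_*}(x_0) = v_*(x_0)$ and realizes optimal control. The one point deserving care is exactly this feasibility-preservation argument, since it ensures that the unreachable states---and therefore the freely chosen columns of $K$---never influence the trajectory; the remainder is bookkeeping, with equality of the two cost series immediate from the definition \eqref{eq: value function} once the trajectories are identified term by term.
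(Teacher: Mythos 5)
Your proof is correct and takes essentially the same route as the paper's: both hinge on the identity $K\dN^i = \textrm{Col}_i(K) = \pi_*(\dN^i)$ for $\dN^i \in C_{\textrm{x}}$, so that the feedback law exactly implements the optimal policy, with optimality then following from the definitions of $v_*$ and $\pi_*$ in \eqref{eq: v*} and \eqref{eq: pi*}. Your extra bookkeeping (well-definedness of $K$ as a logical matrix and the feasibility-preservation induction justifying the arbitrarily chosen columns) makes explicit what the paper leaves implicit, but the core argument is identical.
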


\begin{proof}
	Note that states and control inputs of BCN \eqref{eq: ASSR} are both logical vectors filled with all zeros except a single entry of value 1. We thus have $ K\dN^i = \textrm{Col}_i(K) =  \pi_*(\dN^i)$ for any $ \dN^i \in C_{\textrm{x}} $. That is, we are exactly taking the optimal policy by applying the state feedback law \eqref{eq: state feedback}. By the definitions in \eqref{eq: v*} and \eqref{eq: pi*}, the optimal policy minimizes the value function for each state $ x \in C_{\textrm{x}} $, and $ v_*(x_0) $ is therefore the minimum of the performance index $ J(\cdot) $. 
\end{proof}

The value iteration routine for Problem \ref{prob: 1} is listed in Algorithm \ref{alg: value iteration}. In practice, a small positive threshold $ \theta > 0 $ is used to acquire a sub-optimal solution with an affordable computational cost, since this algorithm generally cannot converge to the exact optimimum in a finite number of iterations \cite{RL2, busoniu2017reinforcement}. Supposing there are $ P $ iterations required for a specific $ \theta $, the computational cost of the loop (Line \ref{line: repeat} - \ref{line: until} ) is $ O(PMN) $. The computation of \eqref{eq: optimal policy} and \eqref{eq: state feedback} runs in $ O(MN) $ and $ O(N) $ respectively. In summary, the time complexity of Algorithm \ref{alg: value iteration} is $ O(PMN) $. Finally, we note that the state feedback controller \eqref{eq: state feedback} is independent of the initial state $ x_0 $. Given an initial state $ x_0 $, the optimal control sequence can be computed readily from \eqref{eq: state feedback} by evolving the BCN from state $ x_0 $ with the control law \eqref{eq: state feedback}.

\begin{algorithm}[htb]
	\caption{Optimal control based on value iteration} \label{alg: value iteration}
	\begin{algorithmic}[1] 
		\Input Problem \ref{prob: 1}: $ L, C_{\textrm{u}}(\cdot), C_{\textrm{x}}, \lambda $. Threshold $ \theta \ge 0 $.
		\Output Optimal state feedback matrix $ K $
		\State Initialize the value function $ V(x) $ arbitrarily for $ x \in C_{\textrm{x}} $
		\Repeat \label{line: repeat}
			\State $ \psi \gets 0 $
			\ForAll{$ x \in C_{\textrm{x}} $}
				\State $ v \leftarrow V(x) $
				\State Update $ V(x) $ by \eqref{eq: Bellman update}
				\State $ \psi \gets \max(\psi, | v - V(x)|) $
			\EndFor
		\Until{$\psi < \theta$} \label{line: until}
		\State Resolve the optimal policy $ \pi^* $ by \eqref{eq: optimal policy}
		\State Construct the matrix $ K $ by Proposition \ref{prop: feedback matrix}
	\end{algorithmic}
\end{algorithm}

\subsection{Madani's Algorithm based Approach}
The primary drawback of the basic value iteration approach in Algorithm \ref{alg: value iteration} is that the number of iterations to get the exact optimal control strategy is not bounded \cite{RL2,busoniu2017reinforcement}. Consequently, only a sub-optimal solution can be acquired in practice. On the other hand, recall that value iteration is a general algorithm for MDPs, especially stochastic ones, while our BCN control is more precisely a DMDP. In \cite{madani2010discounted}, exploiting the determinism of a DMDP, Madani \textit{et al.} develops a specialized and more efficient algorithm for solving discounted-cost DMDP problems. A more desirable advantage of this algorithm is its guarantee that exact solutions can be obtained in finite steps. In this section, we develop a more efficient and effective method to solve Problem \ref{prob: 1} by resorting to Madani's algorithm \cite{madani2010discounted}.

Madani's algorithm handles discounted-cost DMDPs from a graphical perspective and can be viewed as an adaptation of Karp's algorithm for average-cost DMDPs \cite{gao2019infinite}. In the context of optimal BCN control, the DMDP is described by the state transition graph (STG) of the BCN, termed $ G = (V, E) $, where each vertex represents a state, i.e., $ V \deq  C_{\textrm{x}}$, and each edge denotes a state transition, i.e., 
\begin{equation}
E = \{(x, x') \in C_{\textrm{x}} \times C_{\textrm{x}}  |  \exists u \in C_{\textrm{u}}(x), x' = Lux  \}.
\end{equation}

The weight of each edge is the minimal cost of the corresponding state transition, since a transition may be attained by more than one control input at different costs. Consider two connected states (vertices) in $ G $, say $ (x, x') \in E $. The set of admissible control inputs for this transition (edge) is 
\begin{equation}
U_{xx'} = \{ u \in C_{\textrm{u}}(x) | x' = Lux \},
\end{equation}
and the weight of this edge is 
\begin{equation} \label{eq: w}
w(x, x') = \min_{u \in U_{xx'}} g(x, u),
\end{equation}
along with the best control input enabling this transition
\begin{equation} \label{eq: u*}
u^*(x, x') = \argmin_{u \in U_{xx'}} g(x, u).
\end{equation}
Note that the best control input in \eqref{eq: u*} may not be unique, and we can choose an arbitrary one in that case. Besides, the technique by \eqref{eq: w} and \eqref{eq: u*} can also be adapted to the above value iteration approach to first filter out unlikely actions for specific states to improve computational efficiency.

Given BCN \eqref{eq: ASSR} with constraints in Problem \ref{prob: 1}, it is easy to construct the STG  $ G $ following a breadth-first search (BFS) routine, whose details can be found in our previous work \cite{gao2019infinite}. After the STG is available, Madani's algorithm works in three stages, like follows.
\begin{enumerate}
	\item Compute the minimal discounted cost of a $ k $-edge path starting from each vertex $ x \in C_{\textrm{x}} $, termed $ d_k(x) $, for each $ k \in [1, |C_{\textrm{x}}| ]$ with $ d_0(x) = 0 $.
	\item Compute the quantity below for each vertex $ x \in  C_{\textrm{x}}$:
	\begin{equation} \label{eq: y0}
	y_0(x) = \max_{0 \le k < |C_{\textrm{x}}| } \frac{d_{|C_{\textrm{x}}|}(x) - \lambda^{|C_{\textrm{x}}| - k}d_k(x)}{1 - \lambda^{|C_{\textrm{x}}| - k}}.
	\end{equation}
	\item Recompute the the minimal discounted cost of a $ k $-edge path from each vertex $ x \in C_{\textrm{x}} $, termed $ y_k(x) $, but with the initial value $ y_0(x) $ in \eqref{eq: y0}, for $ 1 \le k < |C_{\textrm{x}}|  $.
	\item The optimal value function of each state (vertex) $ x \in C_{\textrm{x}} $ is obtained by
	\begin{equation} \label{eq: v*2}
	v_*(x) = \min_{0 \le k < |C_{\textrm{x}}| } y_k(x).
	\end{equation}
\end{enumerate}

Interested readers can refer to \cite{madani2010discounted} for detailed proof of the correctness of this algorithm. In practical implementation, the above tasks 1) and 3) can be done efficiently via dynamic programming in a form like Bellman optimality equation \eqref{eq: Bellman}. The corresponding pseudocode is presented in Algorithm \ref{alg: Madani}. Once the optimal value function $ v_* $ is obtained, we can again, just like Algorithm \ref{alg: value iteration}, get the optimal policy by \eqref{eq: optimal policy} and the optimal state feedback law by Proposition \ref{prop: feedback matrix}. 

\begin{algorithm}[htb]
	\caption{Optimal control based on Madani's algorithm} \label{alg: Madani}
	\begin{algorithmic}[1] 
		\Input Problem \ref{prob: 1}: $ L, C_{\textrm{u}}(\cdot), C_{\textrm{x}}, \lambda $.
		\Output Optimal state feedback matrix $ K $
		\State Build the STG $ G = (V, E) $ (see \cite{gao2019infinite} for details)
		\State $ d_0(x) \gets 0 $ for each $ x \in  V$
		\ForAll{$ k \in [1, |V|] $}
			\ForAll{$ x \in  V$}
				\State $ d_k(x) \gets \min_{(x, x') \in E}  w(x, x') + \lambda d_{k-1}(x')$
			\EndFor
		\EndFor
		\ForAll{$ x \in V $}
			\State Compute $ y_0(x) $ by \eqref{eq: y0}
		\EndFor
		\ForAll{$ k \in [1, |V| - 1] $}
			\ForAll{$ x \in  V$}
				\State $ y_k(x) \gets \min_{(x, x') \in E}  w(x, x') + \lambda y_{k-1}(x')$
			\EndFor
		\EndFor
		\ForAll{$ x \in V $}
			\State Compute $ v^*(x) $ by \eqref{eq: v*2}
		\EndFor
		\State Get the optimal policy $ \pi^* $ by \eqref{eq: optimal policy}
		\State Construct the matrix $ K $ by Proposition \ref{prop: feedback matrix}
	\end{algorithmic}
\end{algorithm}

As we have analyzed in \cite{gao2019infinite}, the time complexity to build the STG $ G = (V, E) $  subject to constraints in Problem \ref{prob: 1} is $ O(MN) $. The running time of Madani's algorithm in the graph $ G $ is $ O(|V||E|) $ \cite{madani2010discounted}. Note that there are at most $ N $ vertices in the STG, i.e., $ |V| \le N $, and each vertex has at most $ M $ outgoing edges, which means $ |E| \le M|V| \le MN $. Therefore, the running time of Algorithm \ref{alg: Madani} is dominated by the Madani's part, which is consequently $ O(MN^2) $.

\section{A Biological Example: \textit{Ara} Operon Network } \label{sec: example}
In this section, we apply the two approaches proposed above to the \textit{ara} operon network in the bacteria \textit{E.coli} and compare its performance with that of existing methods. The ara operon network is a well studied GRN that plays a key role in  metablism of  the sugar \textit{L-arabinose} in the absence of glucose.  The GRN's BCN model has 9 state variables (nodes), listed in Table \ref{tbl:ara}, and 4 control inputs, $ A_e $, $ A_{em} $, $ A_{ra\_} $, and $ G_e $. The Boolean functions associated with each node are also listed in Table \ref{tbl:ara}. More biological knowledge of this network is available in \cite{jenkins2017bistability}. Its ASSR \eqref{eq: ASSR} has a structure matrix $ L \in \cL_{512 \times 8192} $ with $ M = 16 $ and $ N = 512 $, which is presented in the online material.

\begin{table}[tb]
	\centering
	\caption{BCN model of the \textit{ara} operon network}
	\label{tbl:ara}
	\begin{tabular}{@{}ll|ll@{}}
		\toprule
		Node     & Function                                       & Node & Function \\ \midrule
		$ A $     & $ A_e \land T $                           & $ D $ &  $\lnot A _{ra_+}  $ $ \land $ $A _{ra_-}  $  \\
		$ A_m $ & $ (A_{em} \land T) \lor A_e$ & $ M_S $ & $ A _{ra_+} \land C \land \lnot D $     \\
		$ A _{ra_+} $&   $ (A_m \lor A) \land A _{ra_-} $        &$ M_T $    &    $ A _{ra_+}\land C $        \\
		$C  $&       $ \lnot G_e $B                                         &$ T $     &$ M_T $         \\
		$ E $& $ M_S $                                                &     &          \\
		\bottomrule
	\end{tabular}
\end{table}

 Wu\textit{ et al.} have investigated the infinite-horizon optimal control of the \textit{ara} operon network with average cost in \cite{wu2019optimal}. We reuse their stage cost function in this study as follows:
 \begin{equation}
 g(x, u) = AX + BU
 \end{equation}
 with the column vectors $ X= [x_1, x_2, \cdots, x_9]^{\top}, U = [u_1, u_2, u_3, u_4]^{\top}$ and the two weight vectors as 
 \begin{equation*}
 A = [-28, -12, 12, 16, 0, 0, 0, 20, 16], \ B = [-8, 40, 20, 40].
 \end{equation*}
 We assume an initial state $ x_0 = \delta_{512}^{10} $ and a discount factor $ \lambda = 0.5 $. No constraints are applied here for comparison purpose, since existing methods are not designed to handle constraints. In the value iteraton approach, the $ \epsilon $-suboptimal solutions are obtained. We implement all algorithms in Python 3.7 and measure their running time for Problem \ref{prob: 1} on a laptop PC with a 1.8 GHz Core i7-8550U CPU, 8 GB RAM, and 64-bit Windows 10.  All methods obtain the same optimal value, $ J^* = 5.232 $, except that the value iteration approach gets an approximate one. 
 
 We gather the theoretical time complexity and the measured running time of each method in Table \ref{tbl: comparison}. As we see, the huge difference in running time between different methods accords well with previous time complexity analysis. Clearly, the two DMDP based approaches proposed in this paper can significantly reduce the running time. Note that, though Algorithm \ref{alg: value iteration} has no upper bound on the number of iterations to get an exact optimum, it usually converges very fast in practice if only a suboptimal solution is desired. For example, only 9, 13, and 18 iterations are needed in this case for the three thresholds in Table \ref{tbl: comparison}. Overall, the take-home message is that one can first try Algorithm \ref{alg: value iteration} based on value iteration and then resorts to Algorithm \ref{alg: Madani} that depends on Madani's algorithm if the former cannot work properly.
 \begin{table}
 	\centering
 	\caption{Comparion of time complexity and measured running time in optimal control of the \textit{Ara} operon network}
 	\label{tbl: comparison}
 	\begin{adjustbox}{width=1\linewidth}
 		\begin{threeparttable}
 			\begin{tabular}{lcclc} 
 				\toprule
 				Method                        &  \cite{cheng2014optimal}                &  \cite{zhu2018optimal}               &Algorithm \ref{alg: value iteration}  & Algorithm \ref{alg: Madani}                  \\ 
 				\midrule
 				Time complexity               & $O(N^4)$ &   $ O(N^4)  $  &       \multicolumn{1}{c}{$ O(PMN) $\tnote{1}}         &  $ O(MN^2) $                     \\ 
 				\midrule
 				\multirow{3}{*}{Running time (s)} & \multirow{3}{*}{116736} & \multirow{3}{*}{57078} & 0.21 ($ \theta = 0.1 $)               & \multirow{3}{*}{7.64}  \\
 				&                         &                        & 0.28 ($ \theta = 0.01 $)                &                           \\
 				&                         &                        & 0.36 ($ \theta = 0.001 $)              &                           \\
 				\bottomrule
 			\end{tabular}
 			\begin{tablenotes}
 				\item[1] $ P $ refers to the number of iterations and is not bounded for an exact optimium.
 			\end{tablenotes}
 		\end{threeparttable}
 	\end{adjustbox}
 \end{table}
 
 \begin{remark}
 	The time complexity is stated to be $ O(MN + N^4) $ in \cite{zhu2018optimal}. We note that, in general, there exists $ M < N $ or even $ M \ll N $ in practice, i.e., fewer control inputs than state variables, especially for large networks \cite{lu2019pinning}. Besides, we can always assume $ M \le N $, since a state can transit to at most $ N $ succeeding states regardless of the number of control inputs, and it is useless to have more inputs than state variables. Thus, the time complexity of Algorithm \ref{alg: Madani} is equivalently $ O(N^3) $. Though the running time listed in Table \ref{tbl: comparison} may partly depend on implementation details, the difference in orders of magnitude demonstrates obviously the superiority of our approaches in terms of time efficiency.
 \end{remark}
 
 \section{Conclusions} \label{sec: conclusion}
 We tackled the infinite-horizon optimal control of BCNs with discounted cost in this paper. Unlike the existing methods, we solved this problem from the perspective of a deterministic Makov decision process (DMDP). We first showed that the control of a BCN could be well described by a DMDP and then proposed two approaches for the optimization of this DMDP, one based on value iteration and the other based on Madani's algorithm, while the latter can obtain the exact optimum with lower time complexity than existing work. Besides, the value iteration based approach can potentially get a near-optimal solution with much less running time than all other methods. A benchmark example using the \textit{ara} operon network has demonstrated the superior time efficiency of both proposed approaches. The DMDP view of BCN control may be promising for other problems as well and deserves more investigations. 






\bibliographystyle{IEEEtran}
\bibliography{boolnet}

\end{document}